\documentclass{svproc}
\usepackage{graphicx}%
\usepackage{multirow}%
\usepackage{amsmath,amssymb,amsfonts}%
\usepackage{mathrsfs}%
\usepackage[title]{appendix}%
\usepackage{xcolor}%
\usepackage{textcomp}%
\usepackage{manyfoot}%
\usepackage{booktabs}%
\usepackage{algorithm}%
\usepackage{algorithmicx}%
\usepackage{algpseudocode}%
\usepackage{listings}%
\usepackage{subcaption}
\usepackage[hyphens]{url}
\usepackage{url}

\begin{document}
\mainmatter              % start of a contribution
\title{The Effect of Expertise Signaling when Sharing Medical Information in Opinion Dynamics}

\titlerunning{The Effect of Expertise Signaling in Opinion Dynamics}  % abbreviated title (for running head)
%                                     also used for the TOC unless
%                                     \toctitle is used
%
\author{Moyi Li$^{\star}$\inst{1} \and Chuqiao Yan$^{\star}$\inst{2}
 Jesus Jair Reyes Gutierrez \inst{3} \and  Aleksandra Urman \inst{2} \and Stefania Ionescu \inst{3}}
 
\authorrunning{M. Li \& C. Yan et al.} % abbreviated author list (for running head)

\institute{TU Delft, \and 
University of Zurich,
% \email{chuqiao.yan@uzh.ch},
% \\ WWW home page:\texttt{http://users.com/author1.html}
\and ETH Zurich}

\maketitle              % typeset the title of the contribution
\renewcommand{\thefootnote}{\fnsymbol{footnote}}
\footnotetext[1]{These authors contributed equally to this work.}
% \footnotetext[4]{Author's contribution to this work was primarily conducted at University of Zurich}
\renewcommand{\thefootnote}{\arabic{footnote}}

\begin{abstract}
Social media heavily influences health decisions, yet few opinion dynamics models account for how information is delivered. This work investigates how signaling source expertise shapes trust and long-term opinion formation.
Through an online experiment, we find that adding a professional title to the sender's display name significantly increases a post's perceived credibility. Incorporating these findings to create a data-driven opinion dynamics model, we prove that (1) connecting users through some paths to an expert guarantees consensus with experts' opinion, while (2) higher rates of expertise signaling accelerate reaching consensus.
Simulations show that (3) convincing even 20\% of users to signal their title reduces consensus time to nearly one-third in Erd\H{o}s--R\'enyi networks and (4) higher title adoption brings average opinion much closer to that of experts in scale-free and real-world networks.
Altogether, our work highlights expertise signaling as a viable intervention to promote validated medical information online.

\keywords{Opinion dynamics, Social Media, Agent-based modeling}
\end{abstract}
\section{Introduction}
\label{sec: introduction}
%%%%%%%%%%%Older version%%%%%%%%%%%
% Background
Social media has become an important source for people seeking health-related information. For example, social media played a key role in the dissemination of treatment progress and health guidance during the COVID-19 pandemic \cite{zhong2021risk,song2025social}. Beyond the pandemic, social media has also been used to seek information about specific diseases and to engage in health-related discussions \cite{taik2024use,zhou2023circadian}. Information shared on social media shapes people's knowledge and opinions, and may affect their decisions \cite{ahmed_social_2019}. When related to health, online information could influence self-care, mental state, and lead people to take actions without consulting professionals \cite{thapa2021influence,oh2017investigating}. Therefore, trusting and sharing wrong information can have particularly serious consequences \cite{tasnim2020impact}. 
% Trusting and promoting medical information or misinformation on social media can lead to serious, sometimes even life-threatening consequences. Social media's ability to rapidly disseminate information exacerbates this issue.

It is important to understand what affects people's trust in health information on social media. Prior studies focus on analyzing target audience groups and political relations \cite{li_seeking_2018,de_choudhury_seeking_2014,urman2022politicization}. Less research has examined the relationship between the specific features of post senders and the credibility of their posts. Previous research has demonstrated that people's trust in information sources online is associated with the perceived expertise of the sources \cite{wathen_believe_2002}. For example, \cite{bass_effects_1986} found that clinical therapists' professional titles could influence their perceived credibility. \cite{park2020trust} also suggests that displaying `Dr.' or `PhD' on Reddit increases feedback in expert communities. However, these works either do not focus specifically on medical information or do not consider online information dissemination. To address this, we investigate whether signaling expertise (`Dr.', `PhD') can increase trust in medical information on social media.
% Bass's study about titles focused on trust offline, leaving the effects of expertise signaling online unknown.
%instead of in the online environment, such as social media. The long-term effects of trust in online interactions have not been fully explored either.
% Whether titles, as long as other source features, can influence the perceived credibility of medical information spreading online remains undiscovered.

While short-term effects can be investigated via online experiments, opinion dynamics models are uniquely positioned to understand how these effects compound over time. Examples include the DeGroot model \cite{degroot_reaching_1974},  the Friedkin-Johnsen (FJ) model \cite{friedkin1990social}, and the Flocking model \cite{asch_opinions_1955}. %In this study, we will use the DeGroot model to study trust-weighted opinion updates and incorporate individual stubbornness following the FJ model. 
While such models have been extensively used in the Complex Networks community, recent work has increasingly emphasized the need for data-driven models that better reflect real-world behavior by incorporating empirical findings. Doing so improves the realism and explanatory power of the model in opinion propagation, belief dynamics, customer behavior, and fairness in digital systems \cite{bell2018data,dalege2024networks,du2022data,rieder2025critical,wang2025data}. Aligning with this vision, we propose a data-driven agent-based model grounded in the DeGroot and FJ models with weights calibrated based on data from the online experiment. By varying the behavior of agents~\cite{gilbert2005simulation,hu2017competing,schweighofer2020agent}, we then test the effects of interventions such as encouraging more experts to signal their expertise.

%To simulate opinion dynamics on social media, we then build an data-driven Agent-based Model (ABM) based on results from  online experiment in the first step. Agent-based modeling has been widely used for studying opinion dynamics as it offers flexibility to integrate heterogeneous agent attributes, interaction rules, and interventions \cite{gilbert2005simulation,hu2017competing,schweighofer2020agent}. Beyond the theory-driven ABM, recent work increasingly emphasizes data-driven agent-based modeling that better reflects real-world behavior by incorporating empirical findings. Incorporating empirical data, data-driven ABM improves the realism and explanatory power of the model in opinion propagation, belief dynamics, customer behavior fairness in digital systems \cite{bell2018data,dalege2024networks,du2022data,rieder2025critical,wang2025data}. In the simulation, we investigate how signaling expertise when creating health-related content affects the credibility of this content and the long-term opinion dynamics on social media platforms.

Altogether, our work investigates how signaling expertise when creating health-related content affects the credibility of this content and the long-term opinion dynamics on social media platforms. To do so, we use a data-driven agent-based modeling approach consisting of two steps. First, we conduct an exploratory online experiment to study the impact of different creator features related to expertise signaling on content credibility. The results show that including professional titles, such as `Dr.' or `MD', in the display name makes receivers trust content more. Second, we leverage the data from the online experiment to calibrate an opinion dynamics model. We use both theoretical analysis and agent-based simulations\footnote{Our code is publicly available on GitHub (\url{https://github.com/LiMY00/Expert-Title-in-Opinion-Dynamics})\label{githublink}} to understand the long-term implications of different expertise-signaling attitudes on the opinion dynamics. The results suggest that platform interventions encouraging experts to signal their expertise could more quickly promote information agreed upon as true among experts.

\section{Empirical Study: Online Experiment}
%%%%%%%%%%%Updated Version%%%%%%%%%%%
%In this section, we are going to show the empirical study we conducted online. The study design and its procedure is conducted in Section \ref{subsec: study design} and Section \ref{subsec: participants}. Section \ref{subsec: data analysis and results} shows the data analysis and its results.

\textbf{Study Design.}
We conducted a between-subject exploratory online experiment and considered four factors: sender’s username title (three levels: Non-title or `Dr.' or `MD'), sender gender (two levels: male or female), profile information that indicates relevant medical background (two levels: relevant or irrelevant), and the presence of a source link in the message content (two levels: with or without a link to a scientific source supporting the statement).
To examine the effects of these factors, we designed the experiment in three blocks. In Block 1, Gender (female vs. male) and Title (no title vs. `Dr.' vs. `MD') were varied, resulting in six experimental conditions. In Block 2, Source link (included vs. excluded) and Title (no title vs. `Dr.' vs. `MD') were varied, also resulting in six conditions. In Block 3, only profile information (irrelevant vs. relevant) was varied, resulting in two conditions. Participants were randomly assigned to one condition within each block. Fig.~\ref{fig:factor_wrokflow} shows an example of the measured factors and the experiment workflow.

\begin{figure}[t]
\centering
\begin{subfigure}[t]{0.38\columnwidth}
    \centering
    \includegraphics[width=\linewidth]{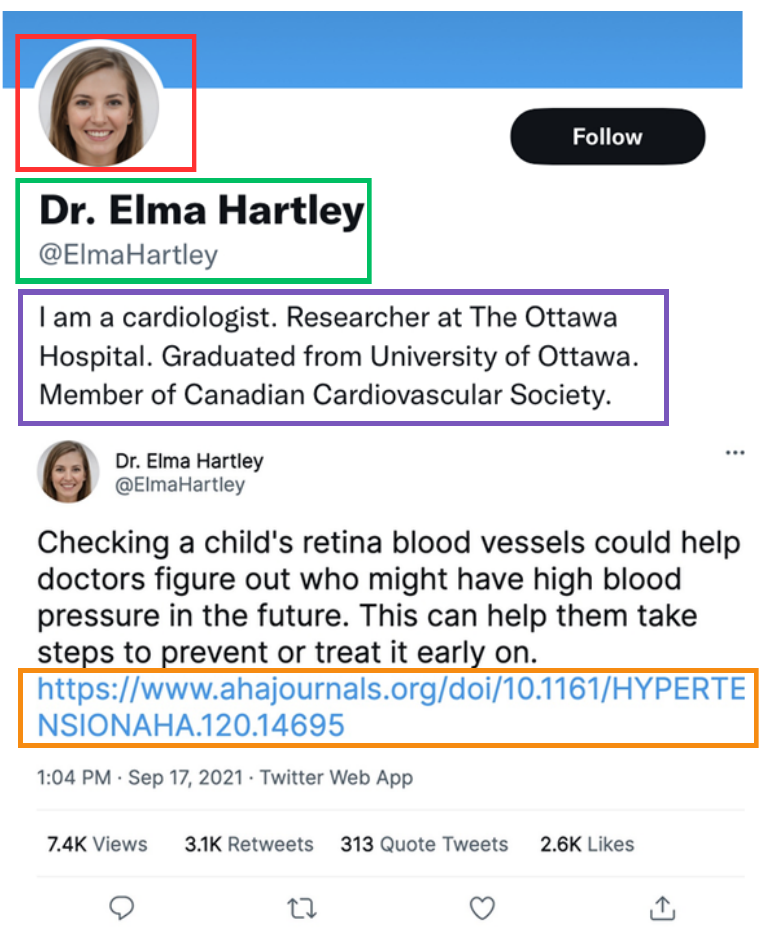}
    \caption{Factors examined in X posts.}
    \label{fig:factors}
\end{subfigure}
\hfill
\begin{subfigure}[t]{0.58\columnwidth}
    \centering
    \includegraphics[width=\linewidth]{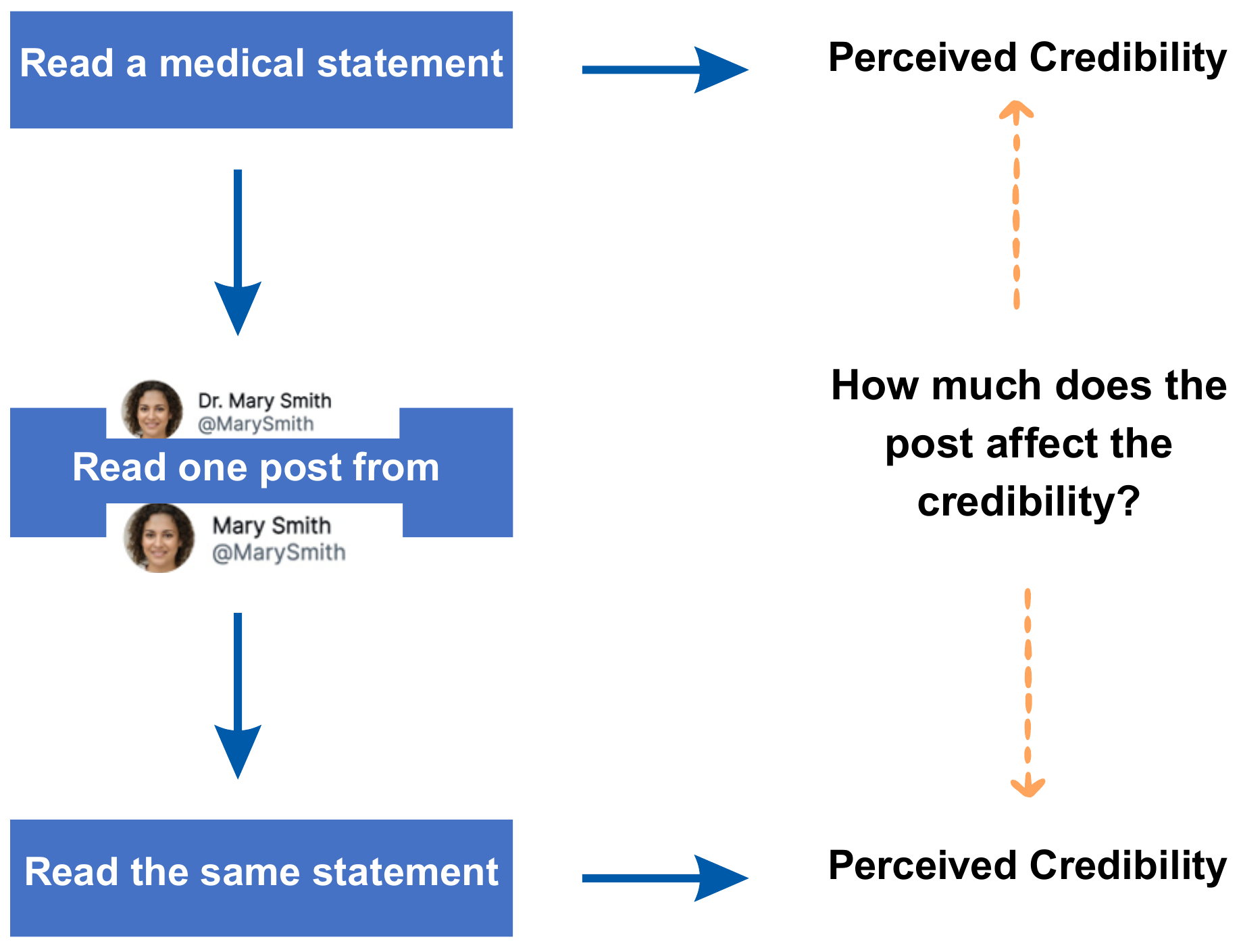}
    \caption{Online experiment workflow.}
    \label{fig:workflow}
\end{subfigure}
\caption{Overview of the factors examined in X posts and the experimental workflow. (a): Example X post showing four measured factors: the profile image at the top of the post, the professional title directly below the profile name, the profile description in the
middle, and the source link below the post text. (b): Workflow of the online experiment. Each participant was randomly assigned to one experimental block and rated the perceived credibility of two posts.}
\label{fig:factor_wrokflow}
\end{figure}

\textbf{Participants.} 
We recruited 180 participants (90 female, 86 male, 4 preferred not to say) through \textit{Prolific}\footnote{\url{https://www.prolific.com}}. The age ranges from `under 20’ to `over 60’, and most participants were aged 20–30 (69\%). These participants were located in 27 countries and completed the experiment in English. Each participant received €0.75 upon completion and passing the attention check.

\textbf{Data Analysis and Results.}
Since the independent variable, \textit{Title}, has three groups, we used a one-way ANOVA to compare the changes in perceived credibility across the three \textit{Title} conditions (no title vs. `Dr.' vs. `MD'). The analysis revealed a significant effect of \textit{Title}, ($F(2,177)=9.86$, $p < .001$). Moreover, we conducted Tukey's HSD post hoc tests to determine which groups differed. The results showed that both the `Dr.' and `MD' groups differed significantly from the no-title group ($p=.0002$ and $p=.0016$, respectively), whereas the two professional-title groups did not differ significantly ($p=.95$) in the credibility they induced. These results indicate that posts from users with a professional title in their display name were perceived as more credible than those from users without a title. Table \ref{tab:anova_tukey} summarizes these results.

\iffalse\begin{table}[t]
\centering
\caption{ANOVA and Tukey HSD post-hoc results for the effect of signaled title of sender in their display name on perceived credibility.}
\label{tab:anova_tukey}
\setlength{\tabcolsep}{4pt}
\begin{tabular}{lrrrrr}
\toprule
\multicolumn{6}{l}{\textbf{ANOVA}} \\
\midrule
\textbf{Source} & \textbf{df} & \textbf{Sum Sq.} & \textbf{Mean Sq.} & \textbf{F} & \textbf{$p$} \\
\hline
Title    & 2   & 21.361  & 10.680 & 9.865 & $<.001$ \\
Residual & 177 & 191.639 & 1.083  &       &         \\
\toprule
\multicolumn{6}{l}{\textbf{Tukey HSD post-hoc comparisons}} \\
\hline
\textbf{Comparison} & \multicolumn{3}{r}{\textbf{Mean Diff.}} & \multicolumn{2}{r}{\textbf{$p$-value}} \\
\hline
`Dr.' vs. No title & \multicolumn{3}{r}{0.748} & \multicolumn{2}{r}{$<.001$} \\
MD vs. No title  & \multicolumn{3}{r}{0.686} & \multicolumn{2}{r}{$.002$}  \\
`Dr.' vs. `MD'     & \multicolumn{3}{r}{0.062} & \multicolumn{2}{r}{$.946$}  \\
\bottomrule
\end{tabular}
\end{table}\fi

\begin{table}[t]
\centering
\caption{ANOVA and Tukey HSD post-hoc results for the effect of the sender’s signaled title in their display name on perceived credibility.}
\label{tab:anova_tukey}
\setlength{\tabcolsep}{8pt}
\begin{tabular}{l *{5}{r}}
\toprule
\multicolumn{6}{l}{\textbf{ANOVA:}} \\

\textbf{Source} & \textbf{df} & \textbf{Sum Sq.} & \textbf{Mean Sq.} & \textbf{$F$} & \textbf{$p$} \\
\midrule
Title    & 2   & 21.361  & 10.680 & 9.865 & $< .001$ \\
Residual & 177 & 191.639 & 1.083  &       &          \\
\midrule [\heavyrulewidth]
\addlinespace[0.5em]
\multicolumn{6}{l}{\textbf{Tukey HSD Post-Hoc Comparisons:}} \\

\multicolumn{3}{l}{\textbf{Comparison}} & \multicolumn{2}{r}{\textbf{Mean Diff.}} & \textbf{$p$-value} \\
\midrule
\multicolumn{3}{l}{`Dr.' vs. No title} & \multicolumn{2}{r}{0.748} & $< .001$ \\
\multicolumn{3}{l}{`MD' vs. No title}  & \multicolumn{2}{r}{0.686} & $.002$  \\
\multicolumn{3}{l}{`Dr.' vs. `MD'}       & \multicolumn{2}{r}{0.062} & $.946$  \\
\bottomrule
\end{tabular}
\end{table}

For the remaining variables, we used  Welch's two-sample t-tests to examine their impact on the perceived credibility, namely \textit{Gender}, \textit{Profile}, and \textit{Link}. The results suggested no significant difference in the change of perceived credibility caused by the sender’s gender, relevance of profile information, or link to scientific support in the content ($p_{gender} = 0.42$, $p_{link} = 0.34$, $p_{profile} = 0.17$). %We include anonymized data and the code for the statistical analysis in our GitHub repository \footnote{https://github.com/LiMY00/Trust-in-Medical-information-Online-Data-driven-Agent-based-Modeling.}.

\section{Data-driven opinion dynamics model}
The above data analysis indicates that including professional titles in users' display names has a small yet significant impact on the credibility of their posts. A natural question is how this effect compounds within a social network of multiple users and affects opinion dynamics. To answer this, we extend classical opinion dynamics models to account for expertise signaling. We use the data from the online experiment to inform the trust parameters used in opinion updates.

\paragraph{Preliminaries}
We denote a social network by a directed graph $G = (V, E)$, where $V$ is the set of \textit{agents} (social media users) and $E$ is the set of edges (following relationships). At every iteration (timestep), each agent $i$ holds an opinion $x_i(t)\in [0, 1]$ capturing their level of trust in a given statement (with $1$ corresponding to full trust). Opinions update via a DeGroot model~\cite{degroot_reaching_1974}:
\begin{equation} \label{equ:degroot}
x(t + 1) = W x(t),
\end{equation}
where $W$ denotes the row-stochastic adjacency matrix of the network $G$, with $w_{ij}$ denoting how much agent $i$ trusts agent $j$'s opinion. When $w_{ii} = 1$, agent $i$ always preserves their opinion, and we refer to it as \emph{stubborn}.

\iffalse Thus, for node $i$, the updating rule at time $t\in \mathbb{N}$ is:

\begin{equation} \label{equ:degroot_individual}
x_{i}(t+1)=\frac{w_{ii}x_{i}(t)+\sum _{j \in N(i)}w_{ij}x_{j}(t)}{w_{ii}+\sum _{j \in N(i)}w_{ij}}, 
\end{equation}
where $N(i)$ denotes the set of neighbors of node $i$, $w_{ij}$ denotes the how much would agent $i$ trust agent $j$'s opinion. \fi

\paragraph{User types.}
We extend the above model by distinguishing between common users and experts (Table \ref{tab:agent_type}) that differ in stubbornness, title, and initial opinion. More precisely, \textit{common users} are the general population on the social media platform who update their opinions at each timestep. \textit{Experts} hold a single opinion universally accepted as true among them and could either signal their expertise or not. We track expertise signaling as our experiment showed it can affect trust\footnote{As there was no significant difference between `Dr.' and `MD' revealed from the online experiment, we only capture one general form of expertise signaling (`Dr.').}.
% \textit{Malicious users} refer to users who intentionally disseminate false or misleading information to deceive public perception and cause harm, such as trolls, bots, or impersonators. \textit{Influencers} are users who possess substantial followings, thus we define them as the top $2\%$ of users who own most of the followers. 
% The stubbornness and title do not change over interaction. The opinion adapts based on rules in the following subsection within $[0,1]$, where opinion = 1 represents the agent completely trusts an opinion, and opinion = 0 represents they do not trust this opinion at all.
\begin{table}[t]
\centering
\caption{Overview of the different types of agents in our model. For each type, we specify: if the agent is \textit{stubborn} (i.e., keeps their initial opinion fixed), the professional \textit{title} they could include in their display name, and the range of values for their \textit{initial opinion}.}
\label{tab:agent_type}
\small
\begin{tabular}{p{2.8 cm} p {2.1 cm}  p {2.5 cm} p {2.8 cm}}
\hline
Agent type & Stubborn & Title & Initial opinion\\
\hline
\hline
Common user & False & None & $[0, 1]$\\
Expert & True & `Dr.'/ None & 1\\
% Malitious user & True & `Dr.'/ None & 0\\
% Influencer & True & `Dr.'/ None & 1\\
\hline
\end{tabular}
% \iffalse
% \begin{tabular}{cccc}
% \hline
% \textbf{\begin{tabular}[c]{@{}c@{}}Agent \\ Type\end{tabular}} & \textbf{\begin{tabular}[c]{@{}c@{}}Stubbor-\\ nness\end{tabular}} & \textbf{Title}                                                   & \textbf{\begin{tabular}[c]{@{}c@{}}Initial \\ opinion\end{tabular}} \\ \hline
% \begin{tabular}[c]{@{}c@{}}Common \\ user\end{tabular}         & False                                                             & Non-title                                                        & {[}0,1{]}                                                           \\ \hline
% Doctor                                                         & True                                                              & \begin{tabular}[c]{@{}c@{}}'`Dr.'' title/\\ Non-title\end{tabular} & 1                                                                   \\ \hline
% \begin{tabular}[c]{@{}c@{}}Malicious \\ user\end{tabular}      & True                                                              & \begin{tabular}[c]{@{}c@{}}'`Dr.'' title/\\ Non-title\end{tabular} & 0                                                                   \\ \hline
% Influencer                                                     & True                                                              & Non-title                                                        & 1                                                                   \\ \hline
% \end{tabular} \fi
\end{table}

\paragraph{Assumptions and parameters.} %Experiment-based Parameter Calibration
While both our model and code (GitHub repository mentioned above) for the simulation are general and could capture a variety of settings, we make some assumptions to limit the scope of this work. The model assumes experts unanimously agree on one opinion (without loss of generality, the one with value $1$) and there is only one type of expertise signaling. In terms of information dynamics, the model assumes that the common user on social media trust other actors than experts equally, and there is no marginal effect of time (i.e., the impact of message from the same type of user is constant at every time step).  %As usual, when we refer to malicious users, we assume they try to promote the opposite opinion ($0$) and are stubborn~\cite{egele2013compa}.
The model above is based on a few parameters, which we overview in Table \ref{tab:sim_parameter}. For the simulations, we ground the parameter choices in our online experiment data and prior work. %We provide an overview below and leave the details for Section Simulation in the Appendix.

% \begin{table}[ht]
% \caption{\label{tab:virt_exp} Tabular description of model parameters (Left), the values taken in the simulations (Center), and the virtual experiments for which these values are varied (Right). We use bold (red) for default parameter values: i.e., we use this value when the parameter in question is not varied. For each combination of parameters, we run $100$ simulations.} \center
% \label{tab:sim_parameter}
% \small
% \begin{tabular}{p{4.3 cm} p {3.5 cm}  p {2 cm}}
% \hline
% Varied Parameters & Values Taken & Vary in\\
% \hline
% \hline
% \% of experts with title & $0, 20, \dots, \textbf{\color{red}100}$ & E1, E4\\
% Number malicious users & \textbf{\color{red}0}, 100, $\dots$, 400& E2\\
% \% titled malicious users &  $\textbf{\color{red}0}, 25, \dots, 100$ & E2\\
% \% titled influencers & $\textbf{\color{red}0}, 25, \dots, 100$  & E3\\
% Network & ER (edge pr. 0.01), & E1, E2\\
%  & \textbf{\color{red}SF} ($\alpha = 2.5$) & \\
%  & PolBlog & \\
% % \hline
% \hline
% Fixed Parameters & Value & \\
% \hline
% \hline
% Total number of agents & 1500 \cite{adamic_political_2005} & \\
% Number of experts & 100 & \\
% Opinion of experts & 1 & \\
% Opinion malicious users & 0 & \\
% Number of influencers & 30 (top-degree 2\%) & \\
% % Network & Directed\&Weighted & \\
% \hline
% \end{tabular}
% \end{table}

\begin{table}[ht]
\caption{\label{tab:sim_parameter} Tabular description of model parameters (Left) and the values taken in the simulations (Right). We use bold (red) for default parameter values: i.e., we use this value when the parameter in question is not varied. For each combination of parameters, we run $100$ simulations with different random seeds.}
\center
\small
\begin{tabular}{p{4.3cm} p{6.5cm}}
\hline
Varied Parameters & Values Taken \\
\hline
\hline
\% of experts with title & $0, 20, \dots, \textbf{\color{red}100}$ \\
Network & ER (edge pr. 0.01), \textbf{\color{red}SF} ($\alpha = 2.5$), PolBlog \\
\hline
\hline
Fixed Parameters & Value \\
\hline
\hline
Total number of agents & 1500 \cite{adamic_political_2005} \\
Number of experts & 100 \\
Opinion of experts & 1 \\
\hline
\end{tabular}
\end{table}

\begin{itemize}
    \item \emph{Initial opinion} values are grounded in participants’ pre-experiment trust in the medical statements. We first map the 7-point Likert scale to $[0,1]$, yielding $\mathcal{N}(0.54, 0.29)$. Then, we sample common users’ initial opinions from this distribution, truncated to $[0,1]$.

    \item \emph{Weight matrix} values are derived from participants' trust in posts from accounts with different display names. For participant $i$, the experiment corresponds to a single-step DeGroot update after seeing a post from an expert: \begin{equation} x_i(1)=w^*_{ii}x_i(0)+w^*_{ij}x_j(0), \quad w^*_{ii}+w^*_{ij}=1, \quad x_j(0)=1
    \end{equation} where $x_i(0)$ and $x_i(1)$ denote pre- and post-experiment opinions of participant $i$ and , and $x_j(0)$ denotes the sender's opinion. Solving for $w^*_{ij}$ and averaging by sender type gives $w^*_{i,Dr}=0.183, w^*_{i,CU}=0.003.$
    % We averaged the data in groups of non-title and '`Dr.'' title, and gained the influence of doctors $I_{d}=0.183$, the influence of non-title users $I_{c}=0.003$. According to Equ. (\ref{equ:influence}), the influence here is only in contrast to the self-influence of participants' opinions from last step\footnote{As we only studied one time step in the online experiment, the self-influence here may include both the influence of last time step and the stubbornness discussed in the Friedkin-Johnsen model.}. 
    For simulations, we follow the approach of \cite{zhou_two-step_2020} and preserve row stochasticity by normalizing the trust based on each agent's connectivity.

    \item \emph{Networks.} 
    We use the Erd\H{o}s--R\'enyi (ER) and Barabási–Albert models \cite{newman2001random} to create ER networks and scale-free (SF) networks \cite{mislove2007measurement,eikmeier2017revisiting}. We also use a real-world political blog network (PolBlog) \cite{adamic_political_2005}, which is widely used in research on information diffusion \cite{peralta2021effect,peralta2022opinion,zhang2025optimal}.
    % These reflect the typical connectivity of real-world social media platforms, where a few high-degree nodes (influencers) dominate visibility and diffusion. 
    % However, \citeauthor{broido2019scale} study also suggests that real-world social networks are mostly weak scale free \cite{broido2019scale}. This network 
\end{itemize}

%\paragraph{Assumptions.} 
%While both our model and code\footnote{https://github.com/LiMY00/Trust-in-Medical-information-Online-Data-driven-Agent-based-Modeling.} for the simulation are general and could capture a variety of settings, we make some assumptions for the scope of this work. More precisely, we assume experts unanimously agree on one opinion (without loss of generality, the one with value $1$). As usual, we assume malicious users try to promote the opposite opinion \cite{egele2013compa}. % In one of the experiments, we assume all influencers assist in promoting the verified information. Additionally, when simulating the effect of RSs, we assume all doctors have already signaled titles. 
%We also assume that there are more bots than doctors on social media \cite{ng2025global}.

\section{Expertise Signaling Impacts Opinion Dynamics}
\paragraph{Evaluation metrics.}
To evaluate the social impact of signaling expertise on social media platforms, we adopt standard metrics and investigate (1) whether a \textit{consensus} (i.e., a state in which opinions no longer change and all agents hold the same opinion) is reached, and, if so, (2) the \textit{time needed to reach consensus} and (3) the \textit{consensus value}. These metrics are crucial as they capture whether communities of users will eventually agree, what they would agree on, and how long it takes for this agreement to arise. In the context of medical information, consensus among scientists and public consensus around expert-backed evidence is socially beneficial, as it promotes informed decision-making and reduces the influence of misinformation \cite{van2015highlighting,maibach2016importance}.
%, we examine three aspects in the simulation experiments for different scenarios:  \textit{i)} the \textbf{final opinions distribution} of common users,  \textit{ii)} the trajectory of the \textbf{average opinion} over time, and  \textit{iii)} the \textbf{speed} at which opinions change.  
%When consensus is reached, we analyze the final average opinion and the time of convergence. 
However, not all discrete-time linear systems converge to a consensus \cite{degroot_reaching_1974,berger1981necessary,dong_managing_2017} and many real-life platforms remain polarized \cite{schmidt2018polarization,jiang2021social}. Finally, we consider the \textit{average user opinion} over time.

\paragraph{Increased connectivity to experts promotes convergence to the experts' opinion.}
As Theorem \ref{theorem_result1} demonstrates, ensuring a directed path between common users and experts guarantees that all users reach consensus on the verified experts' opinion. 
Consequently, the average user opinion also converges to this exact value.
In practice, this suggests that increasing connectivity to experts (either by encouraging following relationships or by directly intervening on recommendation systems to promote their message) drives the network toward a consensus where all users adopt the verified opinion.

\begin{theorem}[Existence of consensus]
\label{theorem_result1}
Consider a DeGroot system $x(t+1) = W x(t)$ with $n$ agents and a non-empty set of stubborn agents holding the same opinion, $c$. % , partitioned into a set of non-stubborn agents $\mathcal{N}$ and a non-empty set of stubborn experts $\mathcal{S}$ holding some opinion $c$. 
If there exists a directed path from every non-stubborn agent to at least one expert, the system converges to a global consensus where both individual beliefs and the average user opinion equal $c$, i.e., $\lim_{t \to \infty} x(t) = c \cdot \mathbf{1}_n \implies \lim_{t \to \infty} \frac{1}{n} \mathbf{1}_n^\top x(t) = c$.
\end{theorem}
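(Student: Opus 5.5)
We plan to reduce the theorem to a convergence statement for an absorbing Markov chain. Order the agents so that the non-stubborn agents come first (index set $\mathcal{N}$, $|\mathcal{N}|=m$) and the stubborn ones last (index set $\mathcal{S}$, non-empty). Since every stubborn agent has $w_{ii}=1$ and $W$ is row-stochastic, their rows are unit vectors. Hence
\[
W=\begin{pmatrix} Q & R\\ 0 & I\end{pmatrix},
\]
where $Q\in\mathbb{R}^{m\times m}$ is nonnegative and $R\ge 0$. Row-stochasticity gives $Q\mathbf{1}_m+R\mathbf{1}_{n-m}=\mathbf{1}_m$. Writing $x=(y,z)$, the stubborn block stays fixed, $z(t)=c\,\mathbf{1}$. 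The free block evolves as
\[
y(t+1)=Qy(t)+cR\mathbf{1}.
\]
The whole proof therefore reduces to showing that $Q$ is Schur stable, i.e.\ $\rho(Q)<1$, equivalently $Q^t\to 0$.

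The key step, which we expect to be the main obstacle, is establishing $\rho(Q)<1$ from the path hypothesis. We would argue via the absorbing random walk. Take an agent $i\in\mathcal{N}$. By assumption there is a directed path of positive-weight edges $i=v_0\to v_1\to\dots\to v_k$ with $v_k\in\mathcal{S}$. Here an edge $u\to v$ means $w_{uv}>0$, matching the convention that $w_{ij}$ is $i$'s trust in $j$; we would state this convention explicitly. We may take $k\le m$, since a shortest path visits no vertex twice. It follows that $(W^m)_{i,\mathcal{S}}>0$, because once the walk reaches $\mathcal{S}$ it stays there. Consequently the $i$-th row sum of $Q^m$ is strictly less than $1$. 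Taking the minimum over the finitely many $i\in\mathcal{N}$ gives $\|Q^m\|_\infty\le 1-\delta<1$ for some $\delta>0$. Therefore $\|Q^{t}\|_\infty\le (1-\delta)^{\lfloor t/m\rfloor}\to 0$, and $\rho(Q)<1$.

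With $\rho(Q)<1$, the matrix $I-Q$ is invertible. The unique fixed point of the affine recursion is
\[
y^\ast=c(I-Q)^{-1}R\mathbf{1}.
\]
Since $R\mathbf{1}=(I-Q)\mathbf{1}$, this simplifies to $y^\ast=c\mathbf{1}_m$. The error satisfies
\[
y(t)-c\mathbf{1}=Q^t\bigl(y(0)-c\mathbf{1}\bigr)\to 0.
\]
Hence $x(t)\to c\,\mathbf{1}_n$ for any initial condition. Finally, the average opinion $\frac1n\mathbf{1}_n^\top x(t)$ is a continuous (linear) function of $x(t)$, so it converges to $\frac1n\mathbf{1}_n^\top c\mathbf{1}_n=c$. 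This gives the stated implication.

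One point to check is that the notion of "expert" in the statement is the same as "stubborn agent holding $c$". If some stubborn agents held other opinions, the fixed point would not be $c\mathbf{1}$. The hypothesis that all stubborn agents share $c$ is exactly what makes the identity $R\mathbf{1}=(I-Q)\mathbf{1}$ give consensus at $c$.
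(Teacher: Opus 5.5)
Your proof is correct and follows the same route as the paper's sketch: the block upper-triangular form with $W_{\mathcal{N}\mathcal{N}}$, Schur stability of that block from the path condition, the affine fixed point equal to $c\mathbf{1}$, and the average following by linearity. Your bound $\|Q^m\|_\infty<1$ makes explicit the step the paper only cites from Bullo; sub-stochasticity alone would not give it, and what actually yields $\rho(Q)<1$ is that every non-stubborn agent has a path to a row with deficient sum.
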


\begin{proof}[Sketch]
%\jai{\noindent \textit{Proof Sketch}
Let $\mathcal{N}$ be the set of non-stubborn agents and $\mathcal{S}$ be the non-empty set of stubborn experts holding some opinion $c$.
Drawing upon established results in algebraic graph theory \cite{bullo2019lectures}, we reorder the agents to partition the row-stochastic matrix $W$ into block upper-triangular form: $W = \begin{bmatrix} W_{\mathcal{N}\mathcal{N}} & W_{\mathcal{N}\mathcal{S}} \\ \mathbf{0} & I \end{bmatrix}$. 
Here, $W_{\mathcal{N}\mathcal{N}}$ captures the trust of non-stubborn agents towards other non-stubborn agents, $W_{\mathcal{N}\mathcal{S}}$ denotes the trust from non-stubborn agents towards experts, and $I$ reflects that experts maintain their beliefs. 
The path condition guarantees that the matrix $W_{\mathcal{N}\mathcal{N}}$ is sub-stochastic; thus, its spectral radius is bounded strictly below $1$, ensuring asymptotic convergence. 
Solving for the equilibrium point of the non-stubborn agents as per  
Discrete-time Affine Systems (\cite{bullo2019lectures}, p.35), resolves to $c \cdot \mathbf{1}_{\vert{}\mathcal{N}\vert{}}$. 
Because every individual agent's opinion converges to $c$, the mean of the agent's opinions converges to $c$ as $t \to \infty$. \hfill $\blacksquare$
\end{proof}

\paragraph{An intervention encouraging more experts to signal their expertise speeds convergence.}
Theorem~\ref{theorem_result2} looks at the speed of convergence. It states that if experts become more trusted (i.e., their corresponding weights increase), then consensus is reached faster. In connection with the results from the online experiment, this suggests that encouraging more experts to signal their expertise will speed the time needed to reach consensus.

\begin{theorem}[Impact of weights on the convergence speeds]
\label{theorem_result2}
Consider a DeGroot system, $x(t+1) = W x(t)$, satisfying the conditions of Theorem~\ref{theorem_result1}. Let a second influence matrix $W'$ be formed by increasing the trust of non-stubborn agents in stubborn agents. Then, the convergence speed under $W'$ is at least as fast as under $W$.
Furthermore, if at least one weight is strictly larger (i.e., $W'>W$), then the new system converges faster.
\end{theorem}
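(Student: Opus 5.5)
We plan to reuse the block decomposition from the proof of Theorem~\ref{theorem_result1}, writing
\[
W = \begin{bmatrix} A & B \\ \mathbf{0} & I \end{bmatrix}, \qquad W' = \begin{bmatrix} A' & B' \\ \mathbf{0} & I \end{bmatrix},
\]
with $A = W_{\mathcal{N}\mathcal{N}}$ and $B = W_{\mathcal{N}\mathcal{S}}$. We measure convergence speed by the spectral radius $\rho(A)$. The error $e(t) = x_{\mathcal{N}}(t) - c\mathbf{1}$ satisfies $e(t+1) = A e(t)$ because $A\mathbf{1} + B\mathbf{1} = \mathbf{1}$. Hence $\|e(t)\| \le C_\epsilon (\rho(A)+\epsilon)^t$, and the asymptotic rate is exactly $\rho(A)$.

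\textbf{Interpreting the modification.} Row stochasticity must be preserved. Increasing the trust of non-stubborn agents in experts therefore means $B' \ge B$ entrywise, with the added mass taken from the non-stubborn block. Equivalently, $0 \le A' \le A$ entrywise. This matches the normalization of \cite{zhou_two-step_2020}: raising $w_{i,\mathrm{Dr}}$ and renormalizing shrinks the other entries of row $i$ proportionally. We will state this formalization explicitly as the meaning of the hypothesis.

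\textbf{Weak inequality.} For nonnegative matrices, $0 \le A' \le A$ implies $\rho(A') \le \rho(A)$. This is the Perron--Frobenius monotonicity property. It also follows directly from $\|A'^k\| \le \|A^k\|$ in any monotone norm together with Gelfand's formula. The conditions of Theorem~\ref{theorem_result1} are inherited by $W'$, so $\rho(A') \le \rho(A) < 1$. This gives convergence at least as fast. If a pointwise statement is wanted, we would instead compare $\infty$-norms: $\|A'^k\mathbf{1}\|_\infty \le \|A^k\mathbf{1}\|_\infty$, and $A^k\mathbf{1}$ is exactly the mass not yet absorbed by experts after $k$ steps.

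\textbf{Strict inequality (main obstacle).} Strict monotonicity of $\rho$ under $A' \lneq A$ requires irreducibility of $A$. Without it, a strictly larger weight inside a non-dominant strongly connected component leaves $\rho$ unchanged. Our plan is as follows.
\begin{enumerate}
\item Decompose $A$ into its strongly connected classes. The spectral radius $\rho(A)$ is the maximum of $\rho$ over the diagonal blocks.
\item If $A$ is irreducible, apply the strict Perron--Frobenius comparison: $0 \le A' \le A$ with $A' \ne A$ gives $\rho(A') < \rho(A)$. This uses the positive Perron vectors $v$ of $A$ and $u$ of $A'$: from $u^\top A' v < u^\top A v$ we get $\rho(A') u^\top v < \rho(A) u^\top v$.
\item In the reducible case, either assume the changed weights lie in every dominant class, or fall back to a strictly smaller absorbed-mass vector. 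Concretely, $A'^k\mathbf{1} \le A^k\mathbf{1}$, with strict inequality in at least one coordinate once $k \ge 1$ (at agent $i$, whose row lost mass), and propagation along paths gives strictness for all agents with a path to $i$.
\end{enumerate}
We expect to present the irreducible case as the theorem's content and note the reducible caveat.
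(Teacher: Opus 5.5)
Your plan follows the paper's own route. It uses the same block partition, the same reading that raising $W_{\mathcal{N}\mathcal{S}}$ forces an entrywise decrease of $W_{\mathcal{N}\mathcal{N}}$, the same monotonicity of the spectral radius, and $\rho(W_{\mathcal{N}\mathcal{N}})$ as the decay rate.

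It is tighter than the paper's sketch in three places:
\begin{itemize}
\item Your error recursion $e(t+1)=Ae(t)$ with the bound $C_\epsilon(\rho(A)+\epsilon)^t$ avoids the modal expansion $\sum_i\lambda_i^t(v_i^\top x_0)v_i$, which tacitly assumes diagonalizability.
\item Your explicit hypothesis $0\le A'\le A$ is genuinely needed. If only the row sums of $W_{\mathcal{N}\mathcal{N}}$ drop, $\rho$ can grow. Take agent $2$ trusting only the expert, and agent $1$ going from non-stubborn weights $(0,\tfrac12)$ to $(\tfrac25,0)$ while its expert weight rises from $\tfrac12$ to $\tfrac35$. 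Then $\rho$ goes from $0$ to $\tfrac25$.
\item Your irreducibility caveat is correct, and it exposes a flaw in the paper rather than in your plan. The sketch asserts $\rho(W'_{\mathcal{N}\mathcal{N}})<\rho(W_{\mathcal{N}\mathcal{N}})$ whenever a single weight strictly decreases. This fails when the change lies outside every dominant class, or when $W_{\mathcal{N}\mathcal{N}}$ is nilpotent. So the strict half of the theorem needs an extra hypothesis such as yours.
\end{itemize}

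Two details of your strict part need repair.

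First, $A'$ can be reducible even when $A$ is irreducible, so the left Perron vector $u$ of $A'$ need not be positive. Add one line. Suppose $u$ vanished on every row where $A'\ne A$. Then $u^\top A=u^\top A'=\rho(A')u^\top$, so $u\ge 0$ would be a nonnegative left eigenvector of the irreducible $A$, hence positive, a contradiction. Thus $u^\top(A-A')v>0$, and your inequality goes through.

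Second, the reducible-case fallback overclaims. The comparison $A'^k\mathbf{1}\le A^k\mathbf{1}$ is fine, but strictness is only guaranteed at $k=1$. Let agent $1$ split trust $\tfrac12,\tfrac12$ between agent $2$ and an expert, let agent $2$ trust only the expert, and let the intervention move agent $1$ to $\tfrac14,\tfrac34$. Then $A=\begin{bmatrix}0&\tfrac12\\0&0\end{bmatrix}$ and $A'=\begin{bmatrix}0&\tfrac14\\0&0\end{bmatrix}$, so $A^k\mathbf{1}=A'^k\mathbf{1}=\mathbf{0}$ for $k\ge 2$. Both systems reach exact consensus at the same step. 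That fallback therefore cannot rescue the strict claim in the reducible case, where the claim is in fact false. Stating the irreducible, or dominant-class, version, as you plan, is the right resolution.
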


\begin{proof}[Sketch]
Building on the block partition $W = \begin{bmatrix} W_{\mathcal{N}\mathcal{N}} & W_{\mathcal{N}\mathcal{S}} \\ \mathbf{0} & I \end{bmatrix}$ introduced in Theorem \ref{theorem_result1}, an increase in the trust weights of the block ($W_{\mathcal{N}\mathcal{S}}$) requires a corresponding decrease in the trust weights of the block ($W_{\mathcal{N}\mathcal{N}}$) to maintain the row-stochastic property of $W$. 
According to the Monotonicity of Spectral Radius, (Lemma 4.10 \cite{bullo2019lectures}, p.56), this decrease implies that the new sub-matrix $W'_{\mathcal{N}\mathcal{N}}$ satisfies $\rho(W'_{\mathcal{N}\mathcal{N}}) \le \rho(W_{\mathcal{N}\mathcal{N}})$, with a strict inequality if at least one weight strictly decreases. 
Under modal decomposition, the evolution of the system's state can be expanded into its eigenmodes as $x(t) = \sum \lambda_i^t (v_i^\top x_0)v_i$. 
Because the system's transient states decay proportionally to the eigenvalue powers $\lambda_i^t$, and these non-consensus eigenvalues are bounded by $\rho(W_{\mathcal{N}\mathcal{N}})$ and smaller than $1$, a reduced spectral radius accelerates their decay to zero. 
Consequently, both the individual agent states and the average user opinion converge to the expert consensus at a faster rate. \hfill $\blacksquare$
\end{proof}

%\jai{It is worth noting that the theorems presented above can be naturally extended to setups with malicious agents. 
%Consider a DeGroot system $x(t+1) = W x(t)$ partitioned into three disjoint sets: non-stubborn agents $\mathcal{N}$, a non-empty set of stubborn experts $\mathcal{E}$ holding opinion $c$, and a non-empty set of stubborn malicious users $\mathcal{M}$ holding opinion $d$, where $c>d$. 
%By reordering the agents, the row-stochastic matrix $W$ expands into a $3 \times 3$ block structure:
%$$W = \begin{bmatrix} W_{\mathcal{N}\mathcal{N}} & W_{\mathcal{N}\mathcal{E}} & W_{\mathcal{N}\mathcal{M}} \\ \mathbf{0} & I_{\vert{}\mathcal{E}\vert{}} & \mathbf{0} \\ \mathbf{0} & \mathbf{0} & I_{\vert{}\mathcal{M}\vert{}} \end{bmatrix}$$

%where $W_{\mathcal{N}\mathcal{N}}$ captures trust of non-stubborn agents towards other non-stubborn agents, $W_{\mathcal{N}\mathcal{E}}$ and $W_{\mathcal{N}\mathcal{M}}$ denotes the trust from non-stubborn agents towards experts and malicious users, respectively, and $I_*$ reflects the static beliefs of both groups.}

\paragraph{Convergence time and average opinion greatly improve as more experts signal their expertise.}
%In Fig. \ref{Fig:E1}, we plot the dynamics of average opinion in 100 simulations for both ER and SF networks to observe the speed for convergence (if any) and the distribution of final opinion. 
While the analytical results specify the direction of change resulting from the intervention, the magnitude of change remains unknown. Therefore, we use simulations to find the intervention's impact across settings. Fig. \ref{Fig:E1} (a) shows that on ER networks agents reach a consensus to a final opinion of $1$. Moreover, the time for reaching consensus %(respectively $2022.76$, $762.74$, $582.29$, $522.59$, $496.55$, $470.64$ steps) 
greatly reduces from the setting in which no doctors signaling their expertise to the one when $20\%$ do so (from 2023 to 763 steps). Further increasing this percentage produces diminishing gains in convergence speed.
%The results of E1 on the SF networks tend to be random even within the same condition. 
In Fig. \ref{Fig:E1} (b), we plot only three conditions on the SF network: $0\%, 50\%, 100\%$ doctors have `Dr.' displayed in usernames. Although the average opinions increase when more doctors signal titles, the confidence intervals overlap. By investigating the simulation runs, we believe this is caused by the randomness in the position of experts within the network. More precisely, the gains would be large if the `Dr.' signaling expertise has a high degree and negligible if they are a periphery node.
We plot the same conditions for the PolBlog network in Fig.~\ref{Fig:E1} (c). Although the opinions do not converge to $1$ as on the ER network, there is less variance in average opinion across runs compared to the SF network. As with the ER network, signaling expertise is especially effective when the first doctors do so.

\emph{Policy implications.} Overall, these results suggest that encouraging doctors to signal expertise is an efficient intervention. Across scenarios, it can bring the average opinion of users closer to that of experts. In networks with high homogeneity and uniformly random connections, it can greatly reduce the time people need to be fully convinced of the experts' opinion. Such an intervention is particularly effective when only a few experts already signal their expertise.

\emph{Accounting for malicious users and influencers.} Having a modular code allowed us to investigate additional scenarios and interventions. First, we assumed we have malicious users, such as bots, promoting information opposite to the experts~\cite{egele2013compa}. In this case, we found it is more important to ensure these users cannot pose as experts (e.g., by verifying accounts that include a title) than to reduce their number through moderation. Second, we investigated the efficacy of increasing the connectivity of experts. To this end, encouraging influencers (i.e., high degree nodes) to re-post content from verified experts improves the average opinion.
Moreover, in accordance with the analytical results, we found that a recommender system that boosts the visibility of even one verified expert guarantees convergence and greatly improves the average opinion over time. % even beyond influencers' reposting. 
We include details of this additional work in the repository\footnote{\url{https://github.com/LiMY00/Expert-Title-in-Opinion-Dynamics}} and plan to further investigate these directions in future work. Meanwhile, these results demonstrate the potential of our data-driven opinion dynamics model to support policymakers and social media platforms in prioritizing interventions.
% \textcolor{red}{One paragraph about other simulation results: Our agent-based model also supports additional interventions concerning expertise. Examples include verifying accounts signaling expertise, inviting influencers to promote experts' content, or increasing the visibility of experts' posts. More simulation code and results for scenarios involving misinformation and visibility allocation can be found in the GitHub link above\footref{githublink}, which is a potential extension work in the future.}

% \begin{figure*}[t]
%     \centering
%     \begin{minipage}[t]{0.30\linewidth}
%         \centering
%         \includegraphics[width=\linewidth]{Figures/E1_ER.png}
%         \caption*{(a) ER network }
%     \end{minipage}
%     \hspace{0.02\linewidth}
%     \begin{minipage}[t]{0.30\linewidth}
%         \centering
%         \includegraphics[width=\linewidth]{Figures/E1_SF.png}
%         \caption*{(b) SF network}
%     \end{minipage}
%     \hspace{0.02\linewidth}
%     \begin{minipage}[t]{0.30\linewidth}
%     \centering
%     \includegraphics[width=\linewidth]{Figures/E1_PBN.png}
%     \caption*{(c) PolBlog network}
%     \end{minipage}
%     \caption{E1 - The average opinion over time for different networks and different shares of experts signaling expertise. Lines show the average over 100 simulations. The shaded areas represent the standard deviation in average opinion across simulations.}   
% \label{Fig:E1}
% \end{figure*}

 \begin{figure*}[t]
    \centering
    \begin{subfigure}[t]{0.32\linewidth}
        \centering
        \includegraphics[width=\linewidth]{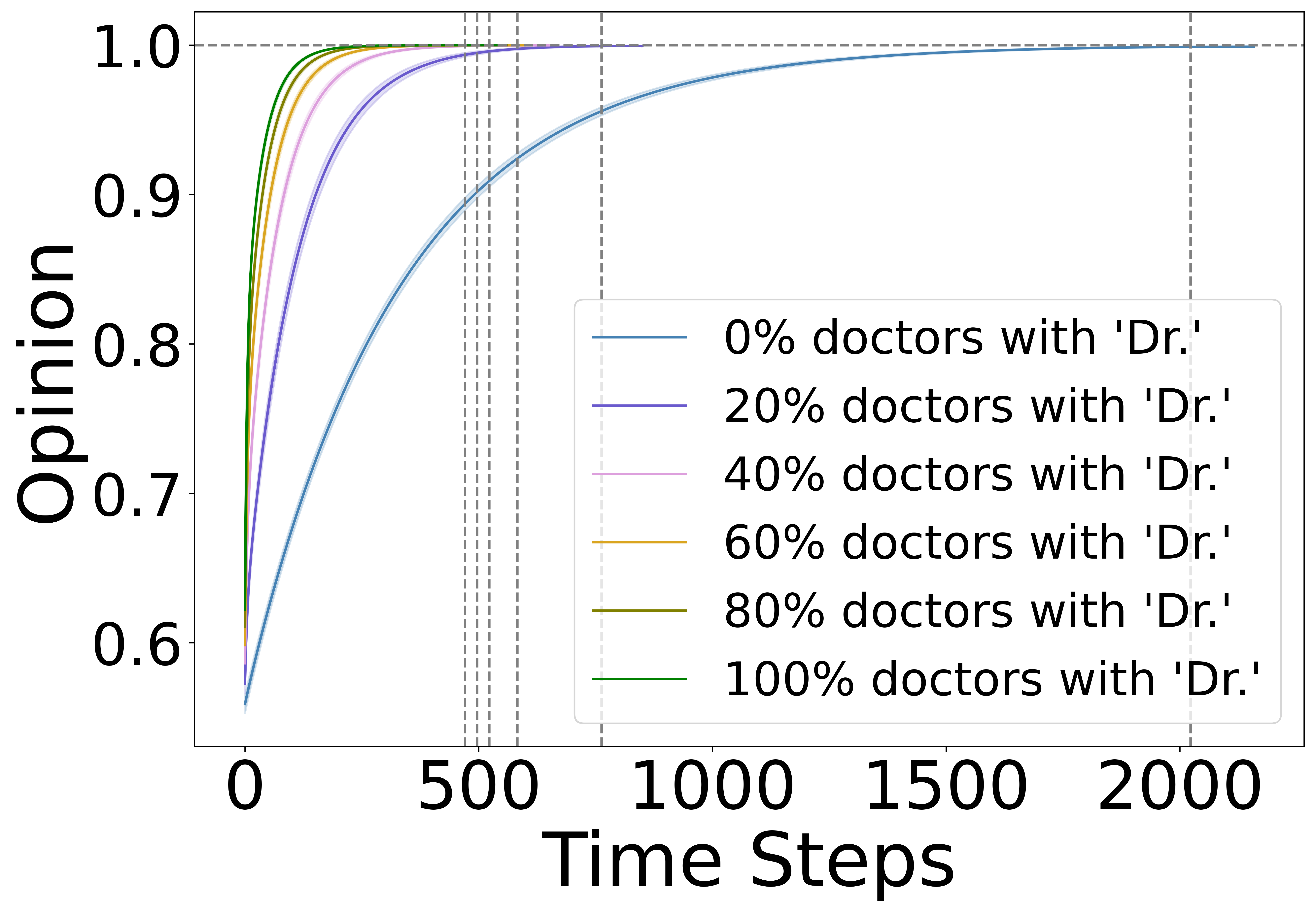}
        \caption{ER network}
    \end{subfigure}
    \hfill
    \begin{subfigure}[t]{0.305\linewidth}
        \centering
        \includegraphics[width=\linewidth]{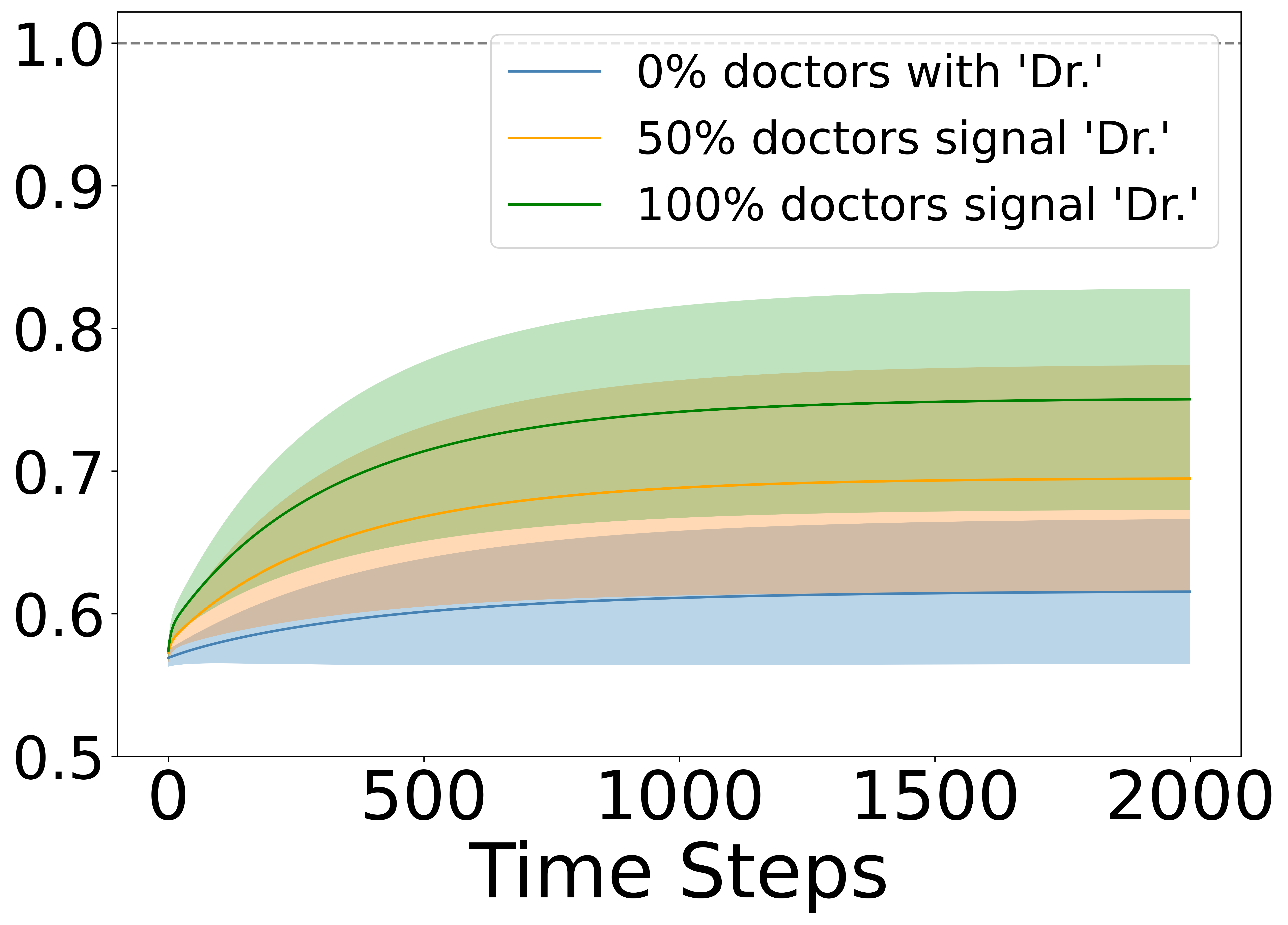}
        \caption{SF network}
    \end{subfigure}
    \hfill
    \begin{subfigure}[t]{0.305\linewidth}
        \centering
        \includegraphics[width=\linewidth]{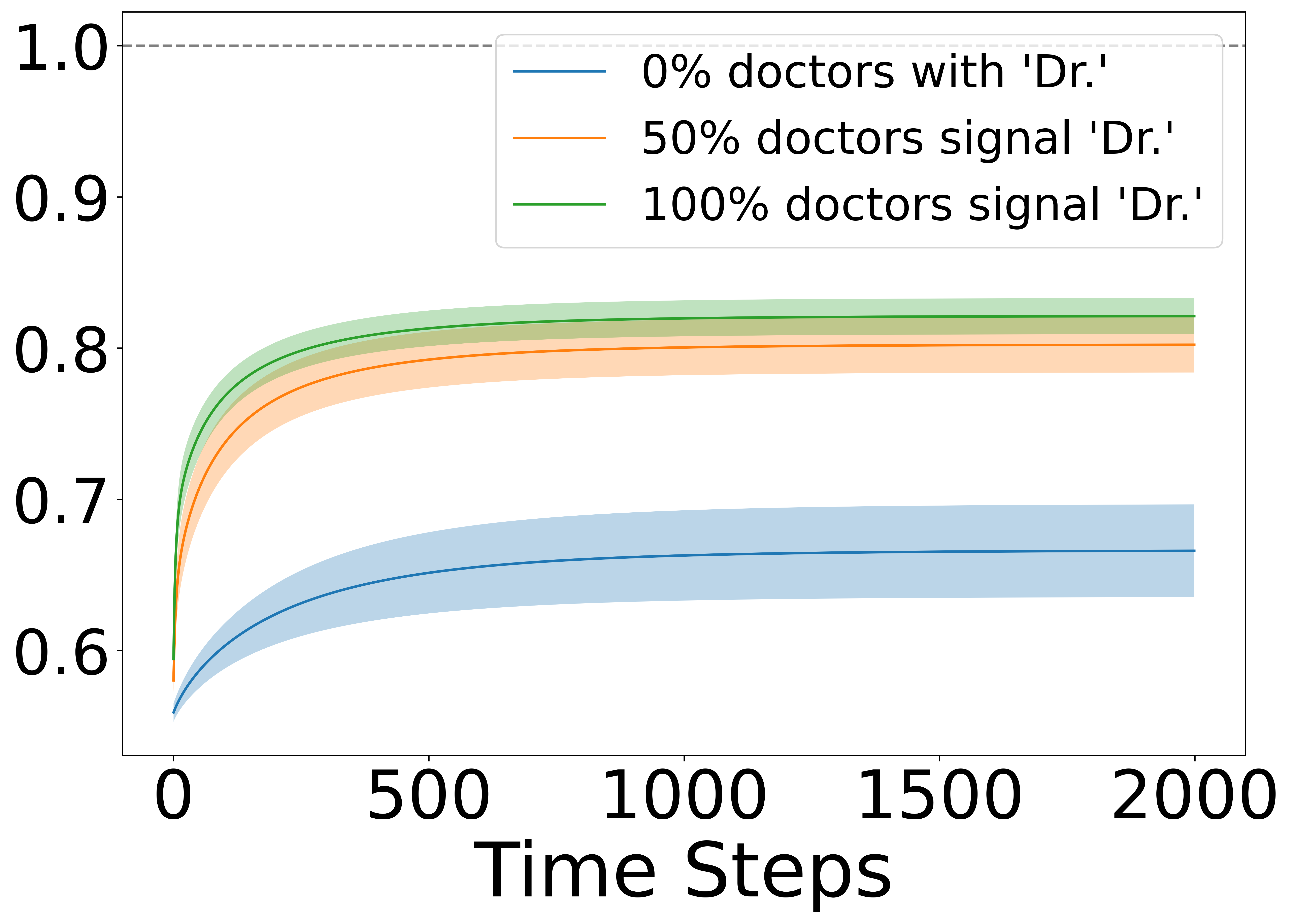}
        \caption{PolBlog network}
    \end{subfigure}
    \caption{Average opinion over time for different networks and shares of experts signaling expertise. Lines show the average over 100 simulations. The shaded areas represent the standard deviation in average opinion across simulations.}
    \label{Fig:E1}
\end{figure*}

\section{Conclusion}
We investigate how signaling expertise (e.g., including professional titles) shapes trust in medical information and influences long-term opinion dynamics on social networks. Through an online experiment, we find that including professional titles (e.g., `Dr.', `MD') in the display name of the source slightly yet significantly increases the perceived credibility of their posts. We use this data to extend the DeGroot model with expertise signaling by grounding its parameters in the experiment's data. A theoretical analysis of the resulting discrete-time system specifies sufficient conditions for consensus to emerge and demonstrates that encouraging expertise signaling increases the speed of reaching consensus. Through simulations, we find that on both random (ER) and real-world (PolBlog) networks, even a small group of experts who include their title can greatly and significantly improve the average opinion. While results on scale-free networks are not significant, the trends persist.
%we find that on a random network, even a small group of titled users can have a large influence on the aggregate opinion. On a scale-free network, however, simply encouraging experts to signal titles or verifying identity (e.g., removing malicious users or their `Dr.' titles) may be less effective than interventions through influencers (e.g., encouraging them to promote verified medical information). %We also compared interventions of inviting influencers to promote and improving experts' visibility through recommender systems, providing insights into which strategy to prioritize under different conditions.
%Results also suggest that an even more effective intervention could be to algorithmically promote the content of experts (e.g., through recommender systems).

With this being a first exploratory analysis, several limitations remain to be tackled by future research.
% Nonetheless, our work remains an initial step and has several limitations that could be tackled by future research. %Being an initial step, several limitations remain to be addressed by future work. 
For example, %our empirical experiment is exploratory. 
future work can conduct a power analysis and build a confirmatory experiment; such work can also verify if our findings apply to more diverse samples of participants. Like any agent-based model, our model makes several simplifying assumptions on user behavior (e.g., binary stubbornness, equal level of trust in non-expert sources, static trust networks). Additionally, we did not consider the time and cost of implementing interventions (e.g., it can be time-consuming for experts to actively participate on social media~\cite{galetti2017scientists}, limiting intervention integration).
% Nonetheless, our work remains an initial step and has several limitations that could be tackled by future research. We do not distinguish users' level of stubbornness, nor do we test and implement how users trust influencers versus ordinary users differently. The model also assumes static networks and does not capture diminishing effects of repeated exposure to expertise signals. Additionally, user behavior is simplified, omitting interactions like likes, comments, and reposts. Finally, our empirical experiment was exploratory, thus small in scale. Future work shall validate these findings with larger, more diverse samples and explore content moderation strategies.

%More generally, we believe our work puts an important pillar in the study of expertise signaling on social media. 
Altogether, our work is a first step towards understanding how expertise signaling and related interventions affect medical opinions. % of the general public.
At a methodological level, it combines online experiments, theoretical analysis, and agent-based modeling, demonstrating the benefits of a holistic approach in linking short-term effects to long-term impact. Having a dedicated online experiment allowed us to vary one variable at a time, thus isolating its effect. This represents an alternative to data-driven modeling methods that calibrate parameters based on digital trace data, which are more contextual and have many hard-to-disentangle effects. It also shows how isolating directions of change via dedicated experiments supports a more insightful theoretical analysis, connected to real-life changes and interventions.
%simple models can use both analytical methods (providing guarantees for simplified settings) and simulations (for investigating more complex, realistic settings and long-term societal effects). 
Thus, more generally, our work demonstrates the potential of experiment-driven modeling for supporting realistic social simulations and an initial comparison over potential interventions.

\section*{Acknowledgments}
This work was supported as a part of NCCR Automation, a National Centre of Competence in Research, funded by the Swiss National Science Foundation (grant number 51NF40\_225155). 
% \end{thebibliography}

% ---- OR ------- Uncomment this section to use bibtex
\bibliographystyle{spmpsci} % We choose the "plain" reference style
\bibliography{refs} % Entries are in the refs.bib file
\end{document}